\newtheorem{theorem}{Theorem}
\newtheorem{corollary}{Corollary}
\newtheorem{lemma}{Lemma}
\begin{document}

\preprint{APS/123-QED}

\title{Tip of the Quantum Entropy Cone}

\author{Matthias Christandl}
\email{christandl@math.ku.dk}
\author{Bergfinnur Durhuus}
\email{durhuus@math.ku.dk}
\author{Lasse Harboe Wolff}
 \email{lhw@math.ku.dk}
\affiliation{Department of Mathematical Sciences, University of Copenhagen, Universitetsparken 5, 2100 Denmark}

\date{\today}

\begin{abstract}
Relations among von Neumann entropies of different parts of an $N$-partite quantum system have direct impact on our understanding of diverse situations ranging from spin systems to quantum coding theory and black holes. Best formulated in terms of the set $\Sigma^*_N$ of possible vectors comprising the entropies of the whole and its parts, the famous strong subaddivity inequality constrains its closure $\overline\Sigma^*_N$, which is a convex cone. Further homogeneous constrained inequalities are also known. 

In this work we provide (non-homogeneous) inequalities that constrain $\Sigma_N^*$ near the apex (the vector of zero entropies) of $\overline\Sigma^*_N$, in particular showing that $\Sigma_N^*$ is not a cone for $N\geq 3$. Our inequalities apply to vectors with certain entropy constraints saturated and, in particular, they show that while it is always possible to up-scale an entropy vector to arbitrary integer multiples it is not always possible to down-scale it to arbitrarily small size, thus answering a question posed by A. Winter.
\end{abstract}

\maketitle

\section{Introduction}\label{sec:introduction}

Entropy is a very important concept in physics, whose role and status have vastly expanded past its original boundaries within thermodynamics. It is a main object of study in many areas of research, including quantum cryptography, information theory, black holes and more.    

In models of the world, it is often very advantageous and natural to consider large systems as composed of smaller distinct subsystems. This calls for a good understanding of the relations among entropies of different subsystems of a joint system. The most important such relation is without a doubt the \emph{strong subaddivity} inequality \cite{lieb_ruskai_1973}, which entails all other known entropy inequalities for multi-partite quantum systems and has long been appreciated in quantum information theory. There has naturally been a great interest in finding new such inequalities. The problem of finding new entropy inequalities is an aspect of a more general research endeavour to adequately describe the set of possible values that the different allocations of entropy in a multi-partite system can take, i.e. to determine whether or not any given ordered set of numbers corresponds to an achievable \emph{entropy-vector}, by which we mean the entropy-values of the marginals of some quantum state.  

In this Letter, we prove a new relationship between the entropies of a multi-partite system, which rules out the possibility of constructing certain small entropy-vectors that otherwise satisfy strong subadditivity and related inequalities. This result, interestingly, entails that the set of achievable entropy vectors is neither a cone nor a closed set -- thus answering a question left open in an influential paper by Pippenger \cite{pippenger_inequalities_2003}. We additionally discuss applications of the new results to a diverse set of areas -- namely topological materials, entanglement theory, and quantum cryptography. 

In the remainder of this introduction, we shall introduce some necessary notation and relevant background concerning the quantum entropy cone. The main results are presented in the following section, after which we discuss some applications and provide a conclusion and outlook for this work. 
\\
\\
Given a quantum system $X$ in a state described by a density operator $\rho$, i.e. a non-negative operator of trace $1$ on a (finite dimensional) Hilbert space $\mathcal H_X$, its von Neumann entropy is given by 
\begin{align} \label{eq:def of entropy}
H_\rho = -{\rm Tr}[\rho \log(\rho)]=-\sum_i \lambda_i \log(\lambda_i)\,,
\end{align}
where $\lambda_i$ are the eigenvalues of $\rho$, and $\log$ denotes the binary logarithm. We shall be concerned with multipartite systems $\mathcal N$ consisting of   
 $N$ constituent systems $X_1,...,X_N$ with associated Hilbert spaces $\mathcal{H}_{X_1},...,\mathcal{H}_{X_N}$, such that the state of $\mathcal N$ is given by a density operator $\rho$ on $\mathcal{H}_{X_1} \otimes ... \otimes \mathcal{H}_{X_N}$. 
The reduced state of a subsystem $\mathcal{X} \subseteq \mathcal{N}$ is then given by
\begin{align*}
\rho_{\mathcal{X}}:={\rm Tr}_{\mathcal{N} \backslash \mathcal{X}}[\rho]\,,
\end{align*}
where ${\rm Tr}_{\mathcal{N} \backslash \mathcal{X}}[ \cdot ]$ denotes the partial trace over $\otimes_{X_i\notin\mathcal{X}} \mathcal{H}_{X_i}$ (and in particular $\rho=\rho_{\mathcal N}$). The entropy $H_{\rho_{\mathcal X}}$ of the reduced state  will also be denoted by $H(\mathcal X)_{\rho}$ or by $H(X_{i_1}\dots X_{i_k})_{\rho}$, if $\mathcal X=\{X_{i_1},\dots, X_{i_k}\}$. 
These marginal entropies define a vector $\vec{H}_{\rho} \in \mathbb R^{2^N-1}$, called the \emph{entropy vector} of $\rho$, whose coordinates are labelled by the non-empty subsystems of $\mathcal N$. E.g., for $N=2$ and $\mathcal N=\{A,B\}$ we have $\vec{ H}_{\rho}= (H(A),H(B), H(AB))_{\rho} \in \mathbb R^3$, while for $N=3$ and $\mathcal N=\{A,B,C\}$ we write 
\begin{align}\label{vector}
\vec{H}_{\rho} \; = & \;(H(A), H(B), H(C),H(BC),\nonumber\\  & \quad \ H(AC),H(AB), H(ABC))_{\rho}\;\in \mathbb R^7\,.
\end{align} 
The main object of study in this context is the set  $\Sigma_N^*$ of all possible entropy vectors associated to $N$-partite systems,
$$
\Sigma_N^* = \{\vec{H}_{\rho} \in \mathbb{R}^{2^N-1} \mid \rho\; \mbox{is a density operator on $\mathcal N$}\}\,.
$$
It is a fundamental result of Pippenger \cite{pippenger_inequalities_2003} that the topological closure $\overline\Sigma_N^*$ of $\Sigma_N^*$ in $\mathbb R^{2^N-1}$ is a convex cone, called the \emph{quantum entropy cone} of $N$-partite systems, i.e. $\overline\Sigma_N^*$ is closed under addition and under multiplication by positive scalars. It is also known, and easy to demonstrate, that $\Sigma_N^*$ has full dimension, i.e. it spans all of $\mathbb R^{2^N-1}$ as a vector space, and that $\overline\Sigma_N^*$ and $\Sigma_N^*$ have identical interiors and hence also identical boundaries. For $N=2$ it is even true that $\overline\Sigma_2^*=\Sigma_2^*$ as will be commented on further below. But for general $N\geq 3$ an appropriate characterisation of the boundary entropy vectors is missing \footnote{A restricted class of boundary states has been studied in \cite{majenz2018constraints}}.

A related but different long standing problem is to determine whether or not $\overline\Sigma_N^*$ is a polyhedral cone, i.e. if it can be specified in terms of a finite number of linear inequalities. The known general inequalities of this sort are of two types: 
\begin{align}  
& \qquad H(\mathcal X)+H(\mathcal Y) \geq H(\mathcal X\cap \mathcal Y)+H(\mathcal X\cup \mathcal Y) \label{eq: def ssa}\\
& \qquad H(\mathcal X)+H(\mathcal Y)\geq H(\mathcal X\setminus\mathcal Y)+H(\mathcal Y\setminus\mathcal X)\,,\label{eq: def weak monotonicity}
\end{align}
called \emph{strong subadditivity} and \emph{weak monotonocity}, respectively. Here, $\mathcal X$ and $\mathcal Y$ are arbitrary subsystems, and by convention we have $H(\emptyset)=0$. We emphasize that not all inequalities of the forms above are independent. Strong subadditivity was first established in \cite{lieb_proof_1973}, but a variety of proofs exist in the literature, see e.g. \cite{lieb_ruskai_1973, ruskai_another_2007-1, nielsen_simple_2005, nielsen_quantum_2010, watrous_theory_2018, christandl_recoupling_2018}. To obtain weak monotonicity one makes use of the fact, referred to as \emph{purification} \cite{nielsen_quantum_2010}, that given a state $\rho$ of $\mathcal{N}$ it is always possible to extend $\mathcal N$ by a system $Y$ and to define a pure state $\eta =\ket{V} \bra{V}$ of $\mathcal N\cup Y$ such that $\rho = \eta_{\mathcal N}$.

The polyhedral cone defined by \eqref{eq: def ssa} and \eqref{eq: def weak monotonicity} is a closed convex cone, and will here be denoted $\Sigma_N$. The question of whether $\Sigma_N=\overline\Sigma_N^*$, or if there exist further independent linear inequalities beyond  \eqref{eq: def ssa} and \eqref{eq: def weak monotonicity}, remains open for $N\geq 4$. For $N\leq 3$ the two closed cones coincide as shown in \cite{pippenger_inequalities_2003}. While it is quite easy to see that $\Sigma_N=\Sigma_N^* = \overline\Sigma_N^*$ hold for $N\leq 2$, the case $N\geq 3$ is different. It has been shown that for $N \geq 4$ there exist further constrained homogeneous linear inequalities \cite{linden_new_2005, cadney_infinitely_2012, majenz_entineq_2018}. 
\\
\\
We shall now delve a bit deeper into the details of the case $N=3$ where the relevant inequalities are
\begin{align}
&I_{XY} := H(X)+H(Y)-H(XY)\geq 0\notag\\
&II_{XY} := H(XZ)+H(YZ)-H(Z) - H(XYZ) \geq 0 \notag\\
&III_{XY} := H(Z)+H(XYZ)- H(XY)\geq 0 \notag\\
&IV_{XY} := H(XZ) + H(YZ) - H(X) - H(Y)\geq 0 \notag\,,
\end{align}
valid for $\{X,Y\}$ equaling $\{A,B\}, \{A,C\}$ or $\{B,C\}$ with $Z\neq X,Y$. This makes a total of twelve inequalities, three of each type. A key observation is that
\begin{equation}\label{eq: M}
M := I_{XY} - II_{XY} = III_{XY} - IV_{XY}
\end{equation}
is independent of the choice of $\{X,Y\}$. It follows that $\Sigma_3$ is a union of two cones 
\begin{align}
 \Sigma_3^+ :\quad &II_{XY}\geq 0\,,\quad IV_{XY}\geq 0\,,\quad M\geq 0\\
\Sigma_3^- :\quad &I_{XY}\geq 0\,,\quad III_{XY}\geq 0\,,\quad M\leq 0\,,   
\end{align}
each of which has seven facets, corresponding to their seven defining inequalities. 

By a slight elaboration of Pippenger's approach \cite{pippenger_inequalities_2003} it can be shown that $\Sigma_3^+\subset\Sigma_3^*$, while $\Sigma_3^-$ behaves differently. For any $\vec H\in \Sigma_3^-$ one finds that there exists a quantum state $\rho$ and a vector $\vec{l}$ belonging to the $1$-dimensional face (half-line) $\ell$ of $\Sigma_3^-$ defined by the six equations
\begin{equation}\label{eq: line1}
\ell \ : \qquad  I_{XY} = 0\,,\qquad III_{XY} = 0 \, ,  
\end{equation}
such that
\begin{align}\label{eq: Hell}
\vec H = \vec{l} + \vec{H}_{\rho}\,.    
\end{align}
If it so happened that $\ell\subset\Sigma_3^*$, it would follow by the additivity of entropy vectors in suitably constructed product states that $\Sigma_3^-\subset\Sigma_3^*$ and hence that $\Sigma_3=\Sigma_3^*$. However, as a consequence of Theorem \ref{thm:most general constrained inequality} below there is an open line segment of $\ell$ ending at the apex which is not contained in $\Sigma_3^*$, and so $\Sigma_3\neq \Sigma_3^*$. 
On the other hand, Pippenger identifies a state $\rho^l$ such that $\vec{H}_{\rho^l} \in \ell$, which by the cone property implies that $\ell\subset\overline\Sigma^*_3$. Using \eqref{eq: Hell} one then obtains that $\Sigma_3^-\subset \overline\Sigma_3^*$ and consequently $\Sigma_3 =\overline\Sigma_3^*$, which is the already mentioned  main result of \cite{pippenger_inequalities_2003}. 

In order to satisfy \eqref{eq: line1}, the entropy vector  $\vec{H}_{\rho^l}$ must satisfy
\begin{equation} 
I(X:Y)_{\rho^l}=0\,,\ 
H(X)_{\rho^l}+H(XYZ)_{\rho^l}= H(YZ)_{\rho^l}\label{eq: line2}
\end{equation}
for any pair $\{X,Y\}$ in $\mathcal N=\{A,B,C\}$ with $Z\neq X,Y$, where the more standard notation $I(X:Y)$ has been used instead of $I_{XY}$ for the \emph{quantum mutual information}. By purification one can alternatively consider a state $\eta=\ket{V} \bra{V}$ on a $4$-partite system $\{A,B,C,D\}$ such that $\rho^l=\eta_{\mathcal N} $. Such a pure state makes
the equations \eqref{eq: line2} take on the more symmetric form 
\begin{equation}\label{eq: line3}
I(X_i:X_j)_{\eta}=0
\end{equation}
for all pairs $X_i,X_j$ in $\{A,B,C,D\}$. Indeed, the state $\rho ^l$ is obtained in \cite{pippenger_inequalities_2003} by first constructing such a pure state $\eta$. Our main theorem below concerns pure states of arbitrary $N$-partite systems that fulfill the conditions (\ref{eq: line3}) for fixed $i$, showing that sufficiently small scalar multiples of their entropy vectors lie outside $\Sigma_N^*$ i.e. cannot be realized by quantum states. For the sake of completeness we exhibit in Appendix \ref{app:explicit expressions for states}, for arbitrary $N\geq 4$, states which fulfill the stated conditions, and thus generalising the pure state $\eta $ mentioned above.

\section{Main results} \label{sec:presenting entropy inequality}

The goal of this section is to establish the following entropy bound.
\begin{theorem} \label{thm:most general constrained inequality}
Let $\rho$ be a pure state of the $N$-partite system $\mathcal{N}=\{X_1,...,X_N\}$ such that $H(X_1)_{\rho} \neq 0$. Suppose further that 
\begin{align*}
I(X_1:X_i)_{\rho}=0 \quad \text{for all} \quad i= 2, \dots, N\,.
\end{align*}
Then the following bound holds:
\begin{align} \label{eq:1st ineq central thm}
\sum_{i=1}^N & H(X_i)_{\rho} \ > \ 1\,.
\end{align}
\end{theorem}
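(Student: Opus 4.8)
The plan is to exploit the purity of $\rho$ together with the vanishing mutual informations to reduce the statement to a single application of subadditivity, with the constant $1$ emerging as $\log 2 \le \log d$, where $d=\mathrm{rank}(\rho_{X_1})$. First I would use purity to write a Schmidt decomposition across the bipartition $X_1\mid Y$ with $Y=X_2\cdots X_N$, say $\ket{V}=\sum_{\alpha=1}^{d}\sqrt{\lambda_\alpha}\,\ket{\alpha}_{X_1}\ket{\alpha}_Y$, where the $\ket{\alpha}_Y$ are orthonormal and $\rho_{X_1}=\sum_\alpha\lambda_\alpha\ket{\alpha}\bra{\alpha}$. The hypothesis $H(X_1)_\rho\neq0$ forces $\rho_{X_1}$ to be mixed, hence $d\ge 2$.

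Second, I would read off from each condition $I(X_1:X_i)_\rho=0$ the factorization $\rho_{X_1X_i}=\rho_{X_1}\otimes\rho_{X_i}$ and expand both sides in the Schmidt basis, writing $\rho_{X_1X_i}=\sum_{\alpha,\beta}\sqrt{\lambda_\alpha\lambda_\beta}\,\ket{\alpha}\bra{\beta}_{X_1}\otimes{\rm Tr}_{Y\setminus X_i}\!\big[\ket{\alpha}\bra{\beta}_Y\big]$. Matching the diagonal $\ket{\alpha}\bra{\alpha}_{X_1}$ blocks (and using $\lambda_\alpha>0$) then yields the crucial fact that every Schmidt vector shares the \emph{same} one-party marginal on $X_i$, namely ${\rm Tr}_{Y\setminus X_i}\!\big[\ket{\alpha}\bra{\alpha}_Y\big]=\rho_{X_i}$ for each $\alpha$ and each $i\ge 2$; that is, the single-party marginals do not depend on $\alpha$. (The off-diagonal blocks give further orthogonality relations, but I do not expect to need them for the bound.)

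The step I expect to be the crux is converting this into a bound that does not deteriorate when the $\lambda_\alpha$ are very uneven, which would make $H(\rho_Y)=H(X_1)_\rho$ arbitrarily small; indeed, subadditivity applied directly to $\rho_Y$ only gives the too-weak $\sum_{i\ge2}H(X_i)_\rho\ge H(X_1)_\rho$. The idea is to replace $\rho_Y$ by the uniformly mixed ``code state'' $\Omega=\tfrac1d\sum_\alpha\ket{\alpha}\bra{\alpha}_Y$ on $Y$: since the $\ket{\alpha}_Y$ are orthonormal, $H(\Omega)=\log d$, while by the previous step $\Omega$ has exactly the same one-party marginals, $\Omega_{X_i}=\rho_{X_i}$. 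Subadditivity, which is the special case of \eqref{eq: def ssa} with disjoint arguments, applied iteratively to $\Omega$ then gives $\log d=H(\Omega)\le\sum_{i=2}^{N}H(\Omega_{X_i})=\sum_{i=2}^{N}H(X_i)_\rho$, so that $\sum_{i\ge2}H(X_i)_\rho\ge\log d\ge 1$. Adding the strictly positive term $H(X_1)_\rho$ finally yields $\sum_{i=1}^{N}H(X_i)_\rho\ge H(X_1)_\rho+1>1$, as claimed. The argument is uniform in $N$, and one can note in passing that for $N\le 3$ the hypotheses are in fact vacuous.
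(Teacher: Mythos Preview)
Your argument is correct and takes a genuinely different, more structural route than the paper. The paper works coordinate-wise: expanding $\ket{V}$ in the product of single-party eigenbases, it proves three combinatorial lemmas bounding $|V_{1\dots 1}|^2$ and $\sum_{x_1>1}|V_{x_11\dots 1}|^2$ in terms of the quantities $\epsilon_i=1-\lambda_1^i$, combines them to deduce $\varepsilon=\sum_i\epsilon_i>\tfrac12$, and finally invokes $H(X_i)\ge 2\epsilon_i$. Your proof instead exploits the Schmidt decomposition across $X_1\mid Y$ and the observation that $I(X_1:X_i)=0$ forces every Schmidt vector on $Y$ to have the \emph{same} $X_i$-marginal; passing to the maximally mixed state $\Omega$ on the Schmidt support of $Y$ then turns a single application of iterated subadditivity into the sharp inequality $\sum_{i\ge 2}H(X_i)_\rho\ge\log d$.

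Each approach buys something different. Yours is shorter, basis-free, and yields the rank-dependent strengthening $\sum_{i\ge 2}H(X_i)_\rho\ge\log\bigl(\mathrm{rank}\,\rho_{X_1}\bigr)$, which already exceeds $1$ without the $H(X_1)$ term and improves as the Schmidt rank grows. The paper's approach, by contrast, controls the individual gaps $\epsilon_i$ and therefore feeds directly into the refinements of Appendix~\ref{app:lower bound for max entropy}, where the bound $\varepsilon>\tfrac12$ is upgraded via $H(X_i)\ge h(\epsilon_i)$ to constraints on the \emph{largest} single entropy such as $\max_i H(X_i)\ge h(1/2N')$; those pointwise conclusions do not follow from your subadditivity step alone.
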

The conditions in the theorem are illustrated in Fig. \ref{figure: representation of the constraints}. Note that they can only be satisfied if $N \geq 4$.

To establish Theorem \ref{thm:most general constrained inequality}, we first list three lemmas below which are the main ingredients in the subsequent proof. Their demonstrations are provided in Appendix \ref{app:proof of lemmas}.
We will use the following notation. Given a state $\rho$ of $\mathcal{N}$, we denote by $\lambda_1^i\geq\lambda_2^i \geq \dots$ the eigenvalues of $\rho_{X_i}$ in decreasing order and by $\ket{e^i_1},\ket{e^i_2},\dots$ a corresponding orthonormal eigenstate basis such that 
\begin{align} \label{eq:diagonalized density matrix}
(\rho_{X_i})_{a b}:=\bra{e^i_a} \rho_{X_i} \ket{e^i_b} = \lambda^i_a \delta_{ab}\,.
\end{align}
 Moreover, we define 
\begin{equation}\label{eq: epsilons}
\epsilon_i:=1-\lambda_1^i\quad\mbox{and} \quad \varepsilon:=\sum_{i=1}^N \epsilon_i\,.
\end{equation}
Clearly, $\sum_{x_i > 1} \lambda_{x_i}^i=\epsilon_i$ and one easily verifies that
\begin{equation} \label{eq:entropy bound larg eigenvalue}
H(X_i) \geq \max \{ h(\epsilon_i), \ - \log(1-\epsilon_i) \} \geq 2 \epsilon_i\,,
\end{equation}

\begin{figure}

\begin{tikzpicture}[scale=2,double distance=3pt,decoration={markings,
  mark=between positions 0 and 1 step 6pt
  with { \draw [fill] (0,0) circle [radius=0.35pt];}}]
\coordinate (Rho1) at (0,0);
\coordinate (Rho3) at (-0.707,0.707);
\coordinate (RhoN) at (1,0);
\coordinate (Rho2) at (-1,0);
\coordinate (Rho1N) at (0.707,0.707);

\path[postaction={decorate}] (Rho3) to (Rho1N);

\draw[double,thick] (Rho1) -- (RhoN);
\draw[double,thick] (Rho1) -- (Rho1N);
\draw[double,thick] (Rho1) -- (Rho3);
\draw[double,thick] (Rho1) -- (Rho2);

\filldraw[white] (Rho1) circle (10pt);
\filldraw[white] (Rho2) circle (10pt);
\filldraw[white] (Rho3) circle (10pt);
\filldraw[white] (RhoN) circle (10pt);
\filldraw[white] (Rho1N) circle (10pt);

\filldraw[.style=draw,fill=white] (Rho1) circle (6.5pt);
\filldraw[.style=draw,fill=white] (Rho2) circle (6.5pt);
\filldraw[.style=draw,fill=white] (Rho3) circle (6.5pt);
\filldraw[.style=draw,fill=white] (RhoN) circle (6.5pt);
\filldraw[.style=draw,fill=white] (Rho1N) circle (6.5pt);

\node[scale=1] at (Rho1) {$X_1$};
\node[scale=1] at (Rho3) {$X_3$};
\node[scale=1] at (RhoN) {$X_N$};
\node[scale=1] at (Rho2) {$X_2$};
\node[scale=1] at (Rho1N) {$X_{N-1}$};

\end{tikzpicture}

\caption{The conditions of Theorem \ref{thm:most general constrained inequality} are here represented with each circle denoting a constituent system $X_i$. The double lines indicate that the mutual information between the two systems is $0$, and it is assumed that the total state is pure.} \label{figure: representation of the constraints}
\end{figure}
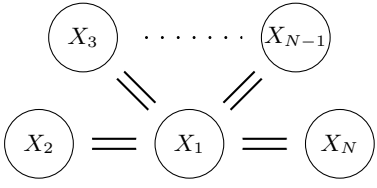
where $h$ denotes the \emph{binary entropy function},
$$
h(x) = -x\log x-(1-x)\log(1-x)\,. 
$$
Assuming  $\rho$ to be pure, i.e. $\rho=\ket{V} \bra{V}$ where $\langle V| V\rangle =1$, we represent $|V\rangle$ with respect to the basis for $\mathcal H_{\mathcal N}$ consisting of tensor products of eigenstates $|e^i_a\rangle$ for the single-party density matrices, that is 
\begin{align} \label{eq: components of V}
\ket{V} = & \sum_{x_1,\dots,x_N}V_{x_1 ... x_N}|e^1_{x_1}\dots e^N_{x_N}\rangle\,, \\
\text{where} \quad & \sum_{x_1,\dots,x_N} |V_{x_1 ... x_N}|^2\, = \,1\,. \label{eq: V has norm 1}
\end{align}
A sum over dummy-indices $x_i \in \mathbb{N}$ will here always run up to $\mathrm{dim}(\mathcal{H}_{X_i})$. The matrix elements of $\rho_{\mathcal N}$ and the reduced states are quadratic expressions of the components of $\ket{V}$; e.g., 
\begin{align} \label{eq: density matrix 1 in terms of V components}
(\rho_{X_1})_{a \ b} & = \sum_{x_2,...,x_N} V_{a x_2 ... x_N} V_{b x_2 ... x_N}^* \\  \label{eq: bipartite density matrix 1 2 in terms of V components}
(\rho_{X_1 X_2})_{a_1 a_2 \ b_1 b_2}  
& = \sum_{x_3,...,x_N} V_{a_1 a_2 x_3... x_N} V_{b_1 b_2 x_3... x_N}^*\,.
\end{align}
Extensive use will be made of the fact that $I(X_i:X_j)=0$ holds if and only if $\rho_{X_iX_j}$ is a product state, which in our notation and choice of basis means that
\begin{align} \label{eq:components of relevant product state for main thm}
(\rho_{X_iX_j})_{a_1 a_2 \ b_1 b_2} =\lambda_{a_1}^i \lambda_{a_2}^j \delta_{a_1 b_1} \delta_{a_2 b_2}\,.
\end{align}
The announced lemmas relate the $\epsilon_i$'s to the components of $V$ as follows.

\begin{lemma} \label{lemma: lower bound V1111}
For any pure state $\rho$ it holds that
\begin{align} \label{eq:lower bound V1111}
|V_{1...1}|^2 \geq 1-\varepsilon.
\end{align}
\end{lemma}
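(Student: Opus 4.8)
The plan is to regard the squared amplitudes $p_{x_1\dots x_N} := |V_{x_1\dots x_N}|^2$ as a probability distribution over index tuples---this is legitimate by the normalization \eqref{eq: V has norm 1}---and to bound the total weight sitting away from the all-ones tuple by a union bound. Crucially, this argument uses only purity and the diagonalization convention \eqref{eq:diagonalized density matrix}; the vanishing mutual information hypotheses play no role, consistent with the lemma being stated for \emph{any} pure state.

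First I would identify the diagonal entries of each single-party density matrix as marginals of $p$. Specializing the reduction formula in the manner of \eqref{eq: density matrix 1 in terms of V components} to the $i$-th party, the diagonal entry is
\[
\lambda^i_a = (\rho_{X_i})_{aa} = \sum_{x_1,\dots,x_N\,:\; x_i = a} p_{x_1\dots x_N}\,,
\]
so $\lambda^i_a$ is exactly the total weight of all tuples whose $i$-th entry equals $a$. Summing over $a \neq 1$ and using $\lambda^i_1 = 1 - \epsilon_i$ from \eqref{eq: epsilons} together with $\sum_a \lambda^i_a = 1$, I conclude that the weight carried by
\[
B_i := \{(x_1,\dots,x_N) : x_i \neq 1\}
\]
is precisely $\epsilon_i$.

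Next I would observe that a tuple differs from $(1,\dots,1)$ if and only if it lies in $\bigcup_{i=1}^N B_i$. By subadditivity of the measure (the union bound), the weight of this union is at most $\sum_{i=1}^N \epsilon_i = \varepsilon$. Since the total weight equals $1$ and the weight of the single remaining tuple is $p_{1\dots 1} = |V_{1\dots 1}|^2$, subtracting yields
\[
|V_{1\dots 1}|^2 = 1 - \sum_{(x_1,\dots,x_N)\neq(1,\dots,1)} p_{x_1\dots x_N} \geq 1 - \varepsilon\,,
\]
which is the asserted inequality \eqref{eq:lower bound V1111}.

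There is no genuine obstacle in this argument; the only step meriting care is the bookkeeping that pins the weight of $B_i$ to $\epsilon_i$, which is immediate from the definitions \eqref{eq:diagonalized density matrix} and \eqref{eq: epsilons}. I would note in passing that the bound is saturated exactly when the sets $B_i$ are pairwise disjoint, reflecting its nature as a first-order union-bound estimate.
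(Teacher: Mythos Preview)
Your proof is correct and follows essentially the same approach as the paper: both split off $|V_{1\dots 1}|^2$ via normalization and bound the remaining sum by overcounting each tuple $(x_1,\dots,x_N)\neq(1,\dots,1)$ once for every index $i$ with $x_i>1$, then identify the resulting sums as $\sum_i\epsilon_i=\varepsilon$ using \eqref{eq: density matrix 1 in terms of V components} and \eqref{eq:diagonalized density matrix}. Your phrasing in terms of a probability measure and a union bound over the events $B_i$ is just a repackaging of the paper's inequality \eqref{eq:counting inequality}.
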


\begin{lemma} \label{lemma: lower bound Va111}
For any pure state $\rho$ such that $I(X_1:X_j)=0$ for all $j \neq 1$ we have
\begin{align} \label{eq:lower boung Vi111}
\sum_{x_1>1} |V_{x_1 1 ... 1}|^2 \geq \epsilon_1 (1+\epsilon_1-\varepsilon)\,.
\end{align}
\end{lemma}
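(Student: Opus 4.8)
The plan is to express the target sum as a deficit from $\epsilon_1$ and then control that deficit using only the pairwise product-state conditions. First, I would recall from \eqref{eq: density matrix 1 in terms of V components} that $\lambda_{x_1}^1 = \sum_{x_2,\dots,x_N}|V_{x_1\dots x_N}|^2$. Summing over $x_1>1$ and using $\sum_{x_1>1}\lambda_{x_1}^1=\epsilon_1$, I would separate in each inner sum the all-ones term from the rest, obtaining
\begin{align*}
\sum_{x_1>1}|V_{x_1 1\dots 1}|^2 = \epsilon_1 - R\,,\qquad R := \sum_{x_1>1}\ \sum_{(x_2,\dots,x_N)\neq(1,\dots,1)}|V_{x_1\dots x_N}|^2\,.
\end{align*}
The claimed bound \eqref{eq:lower boung Vi111} is then equivalent to $R \leq \epsilon_1(\varepsilon-\epsilon_1)=\epsilon_1\sum_{j=2}^N\epsilon_j$, which is the inequality I would aim to establish.

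Second, I would estimate $R$ by a union bound. Since any tuple $(x_2,\dots,x_N)\neq(1,\dots,1)$ has at least one coordinate $x_j>1$, the set of such tuples is contained in the union over $j=2,\dots,N$ of the events $\{x_j>1\}$; as every summand is non-negative, subadditivity gives
\begin{align*}
R \ \leq\ \sum_{j=2}^N\ \sum_{x_1>1,\,x_j>1}\ \sum_{\text{remaining}}|V_{x_1\dots x_N}|^2\,.
\end{align*}
This is the one genuinely clever step, and the point I expect to be the crux: it replaces an $(N-1)$-fold complement, which a priori would involve the joint correlations of $X_1$ with all of $X_2,\dots,X_N$ simultaneously, by single pairwise contributions, each of which can be handled with the only hypothesis available, namely the pairwise conditions $I(X_1:X_j)=0$.

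Finally, I would evaluate each inner sum using the product-state form \eqref{eq:components of relevant product state for main thm}: the sum over the remaining indices is exactly the diagonal element $(\rho_{X_1X_j})_{x_1 x_j\, x_1 x_j}=\lambda_{x_1}^1\lambda_{x_j}^j$, so that summation over $x_1>1$ and $x_j>1$ factorizes as $\big(\sum_{x_1>1}\lambda_{x_1}^1\big)\big(\sum_{x_j>1}\lambda_{x_j}^j\big)=\epsilon_1\epsilon_j$. Summing over $j$ yields $R\leq\sum_{j=2}^N\epsilon_1\epsilon_j=\epsilon_1(\varepsilon-\epsilon_1)$, which is precisely the required bound. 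Note that purity enters only through the marginal identities for $\rho_{X_1}$ and $\rho_{X_1X_j}$, and the hypotheses are used solely via the pairwise product structure, so no joint-independence of $X_2,\dots,X_N$ is assumed or needed.
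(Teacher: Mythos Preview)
Your proof is correct and follows essentially the same approach as the paper: write the target sum as $\epsilon_1$ minus a remainder $R$, bound $R$ via the union bound over the events $\{x_j>1\}$ for $j\geq 2$, and evaluate each resulting pairwise sum as $\epsilon_1\epsilon_j$ using the product-state condition \eqref{eq:components of relevant product state for main thm}. The paper's proof is organized identically, differing only in presentation.
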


\begin{lemma} \label{lemma:product inequality}
For any pure state $\rho$ it holds that
\begin{align} \label{eq:product inequality eq}
(1-\epsilon_1) \sum_{x_1>1} |V_{x_1 1 ... 1}|^2 \leq \epsilon_1 (\varepsilon-\epsilon_1)\,.
\end{align}
\end{lemma}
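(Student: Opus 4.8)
The plan is to work directly with the amplitude representation \eqref{eq: components of V} and to exploit the single structural fact that $\rho_{X_1}$ is diagonal in the chosen eigenbasis $\{\ket{e^1_a}\}$, so that all its off-diagonal entries vanish. Writing this out for the first row via \eqref{eq: density matrix 1 in terms of V components}, for every $b>1$ one has $\sum_{x_2,\dots,x_N} V_{1 x_2\dots x_N} V^*_{b x_2\dots x_N}=0$. I would then isolate the all-ones term of the rest-indices from this sum, so that $V_{1\dots1}\,V^*_{b1\dots1}$ equals minus the sum over all rest-indices $(x_2,\dots,x_N)\neq(1,\dots,1)$.

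The key step is to apply Cauchy--Schwarz to that remaining sum. The two resulting factors are $\sum_{(x_2\dots x_N)\neq(1\dots1)}|V_{1x_2\dots x_N}|^2 = \lambda_1^1 - |V_{1\dots1}|^2$ and $\sum_{(x_2\dots x_N)\neq(1\dots1)}|V_{bx_2\dots x_N}|^2 = \lambda_b^1 - |V_{b1\dots1}|^2$, where I use that the diagonal entry $(\rho_{X_1})_{aa}=\sum_{x_2\dots x_N}|V_{ax_2\dots x_N}|^2$ equals $\lambda_a^1$. This yields $|V_{1\dots1}|^2|V_{b1\dots1}|^2 \le (\lambda_1^1-|V_{1\dots1}|^2)(\lambda_b^1-|V_{b1\dots1}|^2)$, and after expanding and cancelling the common quartic term one is left with the normalized pairwise bound $\frac{|V_{b1\dots1}|^2}{\lambda_b^1}+\frac{|V_{1\dots1}|^2}{\lambda_1^1}\le 1$ for each $b>1$ with $\lambda_b^1>0$ (indices with $\lambda_b^1=0$ force $V_{b1\dots1}=0$ and may be dropped, while $\lambda_1^1>0$ always).

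From here it is bookkeeping. Rearranging gives $|V_{b1\dots1}|^2 \le \lambda_b^1\bigl(1 - |V_{1\dots1}|^2/\lambda_1^1\bigr)$; summing over $b>1$ and inserting $\lambda_1^1 = 1-\epsilon_1$ together with $\sum_{b>1}\lambda_b^1 = \epsilon_1$ produces $\sum_{x_1>1}|V_{x_1 1\dots1}|^2 \le \epsilon_1\bigl(1 - |V_{1\dots1}|^2/(1-\epsilon_1)\bigr)$. Multiplying through by $1-\epsilon_1$ turns the right-hand side into $\epsilon_1\bigl((1-\epsilon_1) - |V_{1\dots1}|^2\bigr)$. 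I then invoke Lemma \ref{lemma: lower bound V1111}, i.e. $|V_{1\dots1}|^2 \ge 1-\varepsilon$, which bounds $(1-\epsilon_1)-|V_{1\dots1}|^2 \le \varepsilon-\epsilon_1$ and delivers $(1-\epsilon_1)\sum_{x_1>1}|V_{x_1 1\dots1}|^2 \le \epsilon_1(\varepsilon-\epsilon_1)$ with no case distinction needed.

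The main obstacle is conceptual rather than computational: recognizing that the only input required is the vanishing of a single off-diagonal family of $\rho_{X_1}$, and that feeding it through Cauchy--Schwarz produces exactly the normalized inequality linking the all-ones weights of distinct $\rho_{X_1}$-eigenvectors. Once that is seen, everything reduces to an elementary rearrangement closed off by Lemma \ref{lemma: lower bound V1111}. I note that, consistent with the hypotheses, this argument uses purity only through the amplitude representation and nowhere invokes the product conditions $I(X_1:X_j)=0$.
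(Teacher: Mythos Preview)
Your proof is correct and follows essentially the same route as the paper's: both exploit the vanishing of the off-diagonal entries $(\rho_{X_1})_{1b}$ via \eqref{eq: density matrix 1 in terms of V components}, isolate the all-ones term, apply Cauchy--Schwarz to the remainder, identify the two factors as $\lambda_1^1-|V_{1\dots1}|^2$ and $\lambda_b^1-|V_{b1\dots1}|^2$, sum over $b>1$, and close with Lemma~\ref{lemma: lower bound V1111}. The only cosmetic difference is that you pass through the normalized form $|V_{b1\dots1}|^2/\lambda_b^1+|V_{1\dots1}|^2/\lambda_1^1\le1$ (handling $\lambda_b^1=0$ separately), whereas the paper goes directly from the product inequality to $(1-\epsilon_1)|V_{a1\dots1}|^2\le(1-\epsilon_1-|V_{1\dots1}|^2)\lambda_a^1$ without dividing, thereby avoiding the case split.
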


We remark that Lemma \ref{lemma: lower bound V1111} is used for the proof of Lemma \ref{lemma:product inequality}, while only Lemma \ref{lemma: lower bound Va111} and Lemma \ref{lemma:product inequality} are used in the proof of Theorem \ref{thm:most general constrained inequality}.

\begin{proof} [Proof of Theorem \ref{thm:most general constrained inequality}.] \;
Combining Lemma \ref{lemma: lower bound Va111} and Lemma \ref{lemma:product inequality} we get
\begin{align*}
(1-\epsilon_1) \epsilon_1 (1+\epsilon_1-\varepsilon) \, \leq \, \epsilon_1 (\varepsilon-\epsilon_1)\,.
\end{align*}
Since $\epsilon_1 > 0$ as a consequence of the assumption $H(X_1) \neq 0$, this is equivalent to  
\begin{align*}
1 + (1+\varepsilon-\epsilon_1) \epsilon_1 \leq 2\varepsilon\,.
\end{align*}
Since the left-hand side of this inequality is larger than $1$, it follows that
$\varepsilon > \frac 12$ which in turn implies   \eqref{eq:1st ineq central thm} by use of \eqref{eq:entropy bound larg eigenvalue} and the definition of $\varepsilon$. This completes the proof of Theorem \ref{thm:most general constrained inequality}.

\end{proof}

\def\newscale{7}
\def\endlength{2.2}

\begin{figure}

\begin{tikzpicture}[3d view={-10}{20},scale=\newscale]

\coordinate (O) at (0,0,0);

\coordinate (C1) at ($0.25*(1,1,1)$);
\coordinate (C2) at ($0.5*(1,0,0)$);
\coordinate (C3) at ($0.25*(1,2,0)$);

\coordinate (E1) at ($0.25*\endlength*(1,1,1)$);
\coordinate (E2) at ($0.5*\endlength*(1,0,0)$);
\coordinate (E3) at ($0.25*\endlength*(1,2,0)$);

\filldraw[draw=none][fill=cyan!30!white,
]
(C1) -- (E1)
{ -- (E2)}
-- (C2);

\filldraw[draw=none][fill=cyan!40!white,
]
(C2) -- (E2)
{ -- (E3)}
-- (C3);

\filldraw[draw=none][fill=cyan!40!white,
]
(C3) -- (E3)
{ -- (E1)}
-- (C1);

\draw[draw=none][fill=cyan!54!white,
] (C1) -- (C2) -- (C3);

\draw[cyan!61!white,thick] (C3) -- ($1.04*(E3)$);

\draw[dashed][red!70!white,ultra thick] (O) -- (C1);
\draw[dashed][thick] (O) -- (C2);
\draw[dashed][thick] (O) -- (C3);
\draw[thick] (C1) -- (C2) -- (C3) -- cycle;

\draw[red!70!white,ultra thick] (C1) -- ($1.04*(E1)$);
\draw[thick] (C2) -- ($1.03*(E2)$);

\filldraw[black] (O) circle (0.25pt);
\node[scale=1.8][yshift=0.4cm,xshift=-0.15cm] at (O) {O};

\node[red][scale=1.8][yshift=0.25cm,xshift=-0.1cm] at ($1.8*(C1)$) {$\ell$};

\end{tikzpicture} 
\caption{The solid figure represents the set of permissible values for $( H(A),H(C),H(ABC) )$ satisfying $III_{XY}=0$ for all $X,Y$, given the the inequalities (\ref{eq: def ssa}) and (\ref{eq: def weak monotonicity}) and Corollary \ref{cor: constrained inequality N constarints}.
We have further made the projection $H(A)=H(B)$ to get a $3$-dimensional surface. The dashed lines span a part of $\Sigma_3$ ruled out by Corollary \ref{cor: constrained inequality N constarints}, and $O$ denotes the apex of $\Sigma_3$. The ray $\ell$ is the top edge in the figure.} \label{figure: entropy cone with line l}
\end{figure}
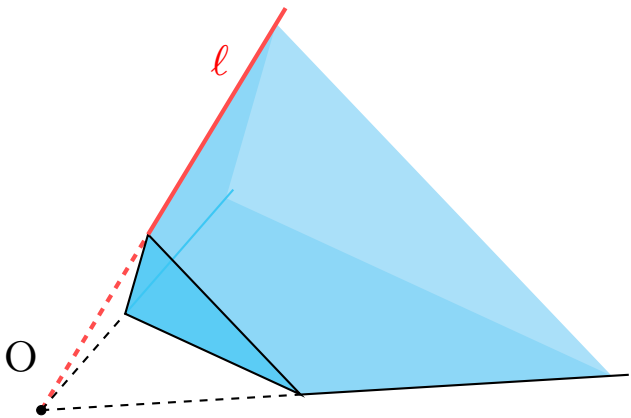

In case the given state $\rho$ is not pure, we can apply Theorem 1 to its purification and obtain (see Appendix \ref{app:lower bound for max entropy}
for more details and further elaboration of the main theorem)

\begin{corollary} \label{cor: constrained inequality N constarints}
Let $\vec{H}$ be a realizable entropy-vector for a system   $\mathcal{N}=\{X_1,...,X_N\}$ which fulfills
$$H(\mathcal{N})>0\;\; \mbox{and}\;\; H(X_i)+H(\mathcal{N})=H(\mathcal{N} \backslash X_i)$$
 for all $i \in \{1,...,N\}$. Then the following bound holds:
\begin{align}\label{newconstraint}
H(\mathcal{N})+\sum_{i=1}^N  H(X_i) & \ > \ 1\,.
\end{align}
\end{corollary}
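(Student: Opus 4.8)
The plan is to reduce Corollary \ref{cor: constrained inequality N constarints} to Theorem \ref{thm:most general constrained inequality} via purification. First I would take the given (generally mixed) state $\rho$ on $\mathcal{N}=\{X_1,\dots,X_N\}$ realizing $\vec H$ and introduce a purifying system $X_{N+1}$, obtaining a pure state $\eta=\ket{V}\bra{V}$ on $\mathcal{N}'=\{X_1,\dots,X_N,X_{N+1}\}$ with $\eta_{\mathcal{N}}=\rho$. By purification the complementary marginals have equal entropy, so $H(X_{N+1})_\eta=H(\mathcal{N})_\rho$ and more generally $H(\mathcal{N}'\setminus X_i)_\eta=H(\mathcal{N}\setminus X_i)_\rho$ for each $i\le N$, while $H(X_i)_\eta=H(X_i)_\rho$.

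Next I would translate the hypotheses into the mutual-information form required by Theorem \ref{thm:most general constrained inequality}. The constraint $H(X_i)+H(\mathcal{N})=H(\mathcal{N}\setminus X_i)$ for each $i$ becomes, after the substitutions above, $H(X_i)_\eta+H(X_{N+1})_\eta=H(\mathcal{N}'\setminus X_i)_\eta$. Since $\eta$ is pure, $H(\mathcal{N}'\setminus X_i)_\eta=H(\{X_i,X_{N+1}\})_\eta=H(X_iX_{N+1})_\eta$, so the relation reads
\begin{equation}\label{eq:cor-zero-MI}
H(X_i)_\eta+H(X_{N+1})_\eta-H(X_iX_{N+1})_\eta=0\,,
\end{equation}
which is exactly $I(X_i:X_{N+1})_\eta=0$ for every $i=1,\dots,N$. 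Thus, relabelling so that $X_{N+1}$ plays the role of the distinguished party (call it $Y$), the pure state $\eta$ on the $(N+1)$-partite system satisfies $I(Y:X_i)_\eta=0$ for all $i\le N$, and $H(Y)_\eta=H(\mathcal{N})_\rho>0$ by the hypothesis $H(\mathcal{N})>0$. This places $\eta$ precisely in the hypothesis class of Theorem \ref{thm:most general constrained inequality}, with $Y$ as the first party.

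Applying Theorem \ref{thm:most general constrained inequality} to $\eta$ then yields $H(Y)_\eta+\sum_{i=1}^N H(X_i)_\eta>1$. Substituting back $H(Y)_\eta=H(\mathcal{N})_\rho$ and $H(X_i)_\eta=H(X_i)_\rho$ gives exactly $H(\mathcal{N})+\sum_{i=1}^N H(X_i)>1$, which is \eqref{newconstraint}. The main thing to get right is the bookkeeping of the purification identities — specifically verifying that the imposed linear constraints are equivalent to the vanishing of the mutual informations between the purifying party and each original party, and confirming that the nondegeneracy condition $H(X_1)_\eta\neq0$ of the theorem is supplied by $H(\mathcal{N})>0$ through $H(Y)_\eta=H(\mathcal{N})_\rho$. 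No essentially new estimate is needed beyond Theorem \ref{thm:most general constrained inequality}; the corollary is a clean consequence once the pure-state reformulation is in place.
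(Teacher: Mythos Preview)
Your approach matches the paper's exactly: purify $\rho$ and apply Theorem~\ref{thm:most general constrained inequality} to the purification with the purifying system playing the role of the distinguished party $X_1$. There is, however, a slip in the purification bookkeeping you flagged as ``the main thing to get right.'' The identity $H(\mathcal{N}'\setminus X_i)_\eta=H(\mathcal{N}\setminus X_i)_\rho$ is false: for pure $\eta$ the complement of $\mathcal{N}'\setminus X_i$ in $\mathcal{N}'$ is $\{X_i\}$, so the left-hand side equals $H(X_i)_\eta$, and for the same reason your next step $H(\mathcal{N}'\setminus X_i)_\eta=H(X_iX_{N+1})_\eta$ is also wrong as written. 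The two errors happen to cancel. The correct chain is
\[
H(\mathcal{N}\setminus X_i)_\rho \;=\; H(\mathcal{N}\setminus X_i)_\eta \;=\; H(X_iX_{N+1})_\eta\,,
\]
the first equality because $\mathcal{N}\setminus X_i\subset\mathcal{N}$ and $\eta_{\mathcal N}=\rho$, the second because the complement of $\mathcal{N}\setminus X_i$ in $\mathcal{N}'$ is $\{X_i,X_{N+1}\}$. With this correction your derivation of $I(X_i:X_{N+1})_\eta=0$ and the subsequent application of Theorem~\ref{thm:most general constrained inequality} go through exactly as you describe and yield \eqref{newconstraint}.
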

We note that the conditions in the corollary can be satisfied if $N \geq 3$. This result excludes a range of vectors in $\Sigma_N$ from $\Sigma_N^*$ that satisfy $N$ linear constraints and hence can be labeled by $2^N-N-1$ parameters. See Figure \ref{figure: entropy cone with line l} for a visualization in case $N=3$. In Appendix \ref{app:explicit expressions for states} we provide a $4$-parameter family of realizable entropy vectors on the boundary of $\Sigma_N$ satisfying the conditions of the corollary.

\section{Applications} \label{sec:application to physics}

The entropy concept itself originally arose from thermodynamical considerations of macroscopic systems consisting of many particles, such as gases. Quantum correlations of such systems can be quantified in terms of the scaling of the \emph{entanglement entropy}, that is the entropy of a subregion $A$. It has been found for many systems that this entropy is roughly proportional to the size of the boundary $\partial A$ and not to the volume, a statement known as the \emph{area law} \cite{eisert-area-law-review}. For topologically ordered systems it is expected that
$$H(A)=\alpha\, |\partial A| -\gamma$$
up to terms vanishing as the "area" $|\partial A|$ gets large. Moreover, the constant additive term $-\gamma$ is expected to be universal and is dubbed the \emph{topological entanglement entropy}. Actually, $-\gamma$ equals an alternating sum of entropies, called $M$, encountered above in \eqref{eq: M}. As shown in  \cite{kitaev_topological_2006, levin_detecting_2006} the value of $\gamma$ in a class of systems is always positive, and $M$ is thus negative. This is precisely the regime in which we identified restrictions on entropy vectors and they may therefore have implications for the attainable values of the topological entropy. We point out, however, that the entropy vectors of the particular finite systems calculated in \cite{kitaev_topological_2006, levin_detecting_2006} in terms of their total quantum dimension do not satisfy the conditions of our theorem. Also, as the constraints we obtained are not balanced \cite{cadney_infinitely_2012}, our results have no direct bearing on the usual situation when a large system size is considered.

Many functions in quantum information theory are defined in terms of optimizations of von Neumann entropies \cite{levin-smith-optimized} or even optimization with entropic constraints \cite{bottleneck}. An example from entanglement theory is the squashed entanglement \cite{squashed}
$$E_{sq}(\rho_{AB})=\inf \frac{1}{2} \left( I(A:BE)_{\rho}-I(A:B)_{\rho} \right)\,,$$
where the minimization is over extensions $\rho_{ABE}$ of $\rho_{AB}$. The results of the present work constrain such optimization 
and it remains to be explored whether they could lead to simplified computations in specific cases. 

Finally, let us consider a cryptographic situation, known as quantum secret sharing \cite{hillary, cleve, gottesman}: Alice (A) wishes to distribute information to $N-1$ parties ($N\geq 4$) 
\begin{itemize}
    \item purely, in the sense that the overall state of her and the constituent systems is pure
    \item secretly, in the sense that every share is in product with hers
    \item non-trivially, in the sense that $H(A)>0\,.$
\end{itemize}
These are precisely the conditions of Theorem \ref{thm:most general constrained inequality} and thus it follows from our work that she cannot do so unless the average share carries a minimum entropy, equal to $1/N$, putting a lower bound on the communication required.

\section{Conclusion}

We conclude this letter by summarizing the new results. Theorem \ref{thm:most general constrained inequality} concerns pure states and establishes, for general values of $N\geq 4$, that inside certain faces of $\Sigma_N$, defined by requiring one constituent system, say $X_1$, to have vanishing mutual information with all others, there is a strictly positive lower bound on the distance from the apex to any entropy vector corresponding to a pure state with $H(X_1) \neq 0$. 

Corollary \ref{cor: constrained inequality N constarints} concerns arbitrary states for $N$-partite system with $N \geq 3$. In particular, for the case $N=3$, it entails a positive lower bound on the distance from the apex to any realizable entropy vector on any given ray within the $4$-dimensional face of $\Sigma_3$ defined by
\begin{align*} 
III_{XY}=0\ \quad \text{for all} \ X,Y,\;\;\mbox{and $H(ABC) \neq 0$}.
\end{align*}
This answers, in particular, a question posed by A. Winter \cite{noauthor_qmath_2022} concerning the possibility of down-scaling certain realizable entropy vectors. For general values of $N\geq 3$, Corollary \ref{cor: constrained inequality N constarints} provides non-homogeneous bounds \eqref{newconstraint}, which rule out down-scaled versions of realisable entropy vectors -- such as those presented in Appendix \ref{app:explicit expressions for states}. It follows that $\Sigma_N^*$ is not a cone for $N\geq 3$. On the other hand, the closure of $\Sigma_N^*$ \emph{is} a cone \cite{pippenger_inequalities_2003}, so it likewise follows that $\Sigma_N^*$ is not closed for $N\geq 3$. This confirms a previous statement from \cite{linden_new_2005} and solves an open problem from \cite{pippenger_inequalities_2003}. 

We emphasise that our results apply to the case of finite dimensional as well as infinite dimensional state spaces, provided the states in question have well-defined entropies. 

In section \ref{sec:application to physics}, we highlighted the potential impact in macroscopic systems, quantum information theory, and quantum cryptography -- pointing to the importance of a further investigation of the shape of $\Sigma_N^*$.

\acknowledgements
 We acknowledge financial support from the VILLUM FONDEN via the QMATH Centre of Excellence (Grant No.10059), the European Research Council (ERC Grant Agreement No. 81876) and the Novo Nordisk Foundation (grant NNF20OC0059939 ‘Quantum for Life’).

\bigskip

\appendix

\begin{appendix}

\section{Proofs of Lemmas \ref{lemma: lower bound V1111}, \ref{lemma: lower bound Va111} and \ref{lemma:product inequality}}\label{app:proof of lemmas}

In this section we provide proofs of Lemma 1, Lemma 2, and Lemma 3 in succession.

\begin{proof}[Proof of Lemma \ref{lemma: lower bound V1111}]

Using \eqref{eq: V has norm 1} we have
\begin{align}\label{eq: id1}
|V_{1...1}|^2 & = 1-\sum_{(x_1...x_N) \neq (1,...,1)} |V_{x_1...x_N}|^2\,.
\end{align}
Since $(x_1...x_N) \neq (1,...,1)$ implies that there exists some index $i$ such that $x_i >1$ and all terms in the above sum are non-negative, it follows that
\begin{align}
& \sum_{(x_1...x_N) \neq (1,...,1)} |V_{x_1...x_N}|^2 \notag \\ \label{eq:counting inequality}
& \leq \sum_{i = 1}^N \quad \sum_{x_1,...,x_N | x_i > 1} |V_{x_1... x_N}|^2\,.
\end{align}
 Identifying the last sum as single-party density matrix elements by use of (\ref{eq: density matrix 1 in terms of V components}), the right-hand side in the inequality above can be rewritten as 
\begin{align}\label{eq: est2}
\sum_{i=1}^N \sum_{x_i > 1} (\rho_{X_i})_{x_i x_i}= \sum_{i=1}^N \sum_{x_i>1} \lambda_{x_i}^i = 
\sum_{i=1}^N \epsilon_i = \varepsilon\,.
\end{align}
Combining \eqref{eq:counting inequality} and \eqref{eq: est2} with \eqref{eq: id1}, the claimed inequality follows.
    
\end{proof}

\begin{proof}[Proof of Lemma \ref{lemma: lower bound Va111}]

In a similar way as in the proof of Lemma \ref{lemma: lower bound V1111}, we have
\begin{align}
& \sum_{x_1>1} |V_{x_1 1 ... 1}|^2 = \sum_{x_1>1} \sum_{x_2,...,x_N} |V_{x_1 x_2  ... x_N}|^2 \notag \\
& \qquad \qquad \qquad \quad  - \sum_{x_1>1} \sum_{(x_2,...,x_N) \neq (1,...,1)} |V_{x_1 x_2  ... x_N}|^2 \notag\\
& \qquad \quad \geq \sum_{x_1>1} \sum_{x_2,...,x_N} |V_{x_1 x_2  ... x_N}|^2 \\
& \qquad \qquad -\sum_{i\geq2}^N \quad \sum_{x_1,...,x_N | x_1,x_i > 1} |V_{x_1 x_2 ... x_N}|^2\,.\label{eq: est3}
\end{align}

Using (\ref{eq: density matrix 1 in terms of V components}) and (\ref{eq: bipartite density matrix 1 2 in terms of V components}), we see that
\begin{align*}
& \sum_{x_1>1} \sum_{x_2,...,x_N} |V_{x_1 x_2  ... x_N}|^2 = \sum_{x>1} (\rho_{X_1})_{xx} = \sum_{x>1} \lambda_x^1=\epsilon_1\,, 
\end{align*}
while \eqref{eq: bipartite density matrix 1 2 in terms of V components} and \eqref{eq:components of relevant product state for main thm} yield 
\begin{align*}
& \sum_{i\geq2}^N \quad \sum_{x_1,...,x_N | x_1,x_i > 1} |V_{x_1 x_2 ... x_N}|^2 \\
& \quad = \sum_{i\geq2}^N \sum_{x_1,x_i > 1} (\rho_{X_1 X_i})_{x_1 x_i \ x_1 x_i} = \sum_{i\geq2}^N \sum_{x_1,x_i > 1} \lambda_{x_1}^1 \lambda_{x_i}^i \\
& \qquad = \sum_{i \geq 2}^N \epsilon_1 \epsilon_i=\epsilon_1(\varepsilon-\epsilon_1)\,.
\end{align*}

Inserting the last two identities into \eqref{eq: est3}, we get 
\begin{align*}
& \sum_{x_1>1} |V_{x_1 1  ... 1}|^2  \geq  \epsilon_1-\epsilon_1(\varepsilon-\epsilon_1) = \epsilon_1 (1+\epsilon_1-\varepsilon)\,,
\end{align*}
which proves \eqref{eq:lower boung Vi111} and hence the Lemma. 

\end{proof}

\begin{proof}[Proof of Lemma \ref{lemma:product inequality}]

For fixed $a>1$ we have 
\begin{align*}
& |V_{1 1...1}|^2|V_{a 1...1}|^2 = |V_{1 1...1} V_{a 1...1}^*|^2 \\
& \quad = \left|  \sum_{x_2,...,x_N } V_{1 x_2 ... x_N} V_{a x_2 ... x_N}^* \right. \\
& \qquad \qquad \qquad    \left. - \sum_{(x_2,...,x_N) \neq (1,...,1)} V_{1 x_2 ... x_N} V_{a x_2 ... x_N}^*  \right|^2 \\
& \qquad = \left| \sum_{(x_2,...,x_N) \neq (1,...,1)} V_{1 x_2 ... x_N} V_{a x_2 ... x_N}^*  \right|^2\,,
\end{align*}
where in the last step it has been used that (\ref{eq: density matrix 1 in terms of V components}) and (\ref{eq:diagonalized density matrix}) imply that $\sum_{x_2,...,x_N } V_{1 x_2 ... x_N} V_{a x_2 ... x_N}^* = (\rho_{X_1})_{1 a}=\lambda_1^1 \delta_{1 a}=0$. 

We now use the Cauchy-Schwarz inequality to bound the last expression from above by  
\begin{equation*}
\sum_{(x_2,...,x_N) \neq (1,...,1)} |V_{1x_2...x_N}|^2 \sum_{(x_2,...,x_N) \neq (1,...,1)} |V_{a x_2...x_N}|^2\,.
\end{equation*}
Using again relations (\ref{eq: density matrix 1 in terms of V components}) and (\ref{eq:diagonalized density matrix}) to rewrite these sums in terms of matrix elements of $\rho_{X_1}$, we obtain 
\begin{align*}
&\; \quad |V_{1 1...1}|^2|V_{a 1...1}|^2 \\
& \leq \left( (\rho_{X_1})_{1 1}-|V_{1 1 ... 1}|^2 \right) \left( (\rho_{X_1})_{a a}-|V_{a 1 ... 1}|^2 \right) \\
& = \left( 1-\epsilon_1-|V_{1 1 ... 1}|^2 \right) \left( \lambda_a^1-|V_{a 1 ... 1}|^2 \right)\,.
\end{align*}
Rewriting this inequality as 
\begin{align*}
(1-\epsilon_1) |V_{a 1 ... 1}|^2 \leq (1-\epsilon_1-|V_{1 ... 1}|^2) \lambda_a^1
\end{align*}
and summing over $a>1$ then yields  
\begin{equation*}
(1-\epsilon_1) \sum_{x_1>1} |V_{x_1 1 ... 1}|^2 \leq (1-\epsilon_1-|V_{1 ... 1}|^2)\epsilon_1\,.
\end{equation*}
Finally, applying the lower bound from Lemma \ref{lemma: lower bound V1111} on the right-hand side gives the desired inequality \eqref{eq:product inequality eq}, which proves the lemma. 
\end{proof}

\section{Further elaboration and refinement of the main theorem} \label{app:lower bound for max entropy}

In this section we discuss a slight refinement of Theorem \ref{thm:most general constrained inequality}, or rather Corollary \ref{cor: constrained inequality N constarints}, concerning the entropy-vector $\vec H=\vec H_\rho$ for a system $\mathcal{N}=\{X_1,...,X_N\}$ in a state $\rho$ that is not necessarily pure.

\vspace{0.5 cm}
Assume $\vec H$ fulfills either of the following conditions:
\begin{itemize}
    \item[i)] There exists an $i \in \{1,...,N\}$ such that $H(X_i) \neq 0$ and $I(X_i:X_j)=0$ for $j\neq i$ and $H(X_i)+H(\mathcal{N})= H(\mathcal{N} \backslash X_i)$\,,
\item[ii)] $H(\mathcal{N}) > 0$ and $H(X_i)+H(\mathcal{N})=H(\mathcal{N} \backslash X_i)$ for all $i \in \{1,...,N\}$\,. 
\end{itemize}

We then claim the bound 
\begin{align} \label{eq:refinement of main result largest entropy}
\max( \{H(X_1),...,H(X_N),H(\mathcal{N})\} ) & \ > \ h \left( \frac{1}{2 N^{'}} \right)
\end{align}
 holds, where $N^{'}$ is the number of non-zero elements among $H(X_1),...,H(X_N),H(\mathcal{N})$. 

Indeed, we first note that the constituent systems with vanishing entropy can be disregarded since they are in a pure state in product with the rest and do not contribute to any of the component entropies of $\vec H$. By purification of the resulting system we obtain an $N'$-partite system $\mathcal N'$ in a pure state that satisfies the conditions of Theorem \ref{thm:most general constrained inequality} by using \eqref{eq:entropy bound larg eigenvalue} and either of the assumptions i) or ii). In case of i) $X_1$ is replaced by $X_i$ while in case ii) $X_1$ is replaced by the purifying system $E$ whose entropy is $H(\mathcal N)$. From the proof of Theorem \ref{thm:most general constrained inequality} as applied to the system $\mathcal N'$ we find that $\varepsilon\equiv \sum_{i}\epsilon_i + \epsilon_E>\frac 12$, with obvious notation. Using the intermediate inequality from the elementary entropy bound (\ref{eq:entropy bound larg eigenvalue}) and the fact that  $\max \left[ h(x), \ - \log(1-x) \right]$ is an increasing function of $x \in [0,1]$, this implies that the left-hand side of \eqref{eq:refinement of main result largest entropy} is strictly lower bounded by 
$$
\max \left[ h \left(\frac{\epsilon}{N'} \right), \ - \log \left(1-\frac{\epsilon}{N'} \right) \right] \geq h \left( \frac{1}{2N'} \right)
$$
as claimed. 

This result entails, for example, that vectors on the open line-segment of $\ell$ between the origin and $\alpha (1,1,1,2,2,2,1)$, where $\alpha =h \left( \frac{1}{8} \right) \approx 0.54$, cannot be realized as entropy vectors of a quantum system. Note also that condition ii) above coincides with the assumptions of  Corollary \ref{cor: constrained inequality N constarints}. 

\bigskip
Next, we discuss further restrictions on states satisfying the conditions in Theorem \ref{thm:most general constrained inequality}. Suppose $N=4$ and let $\rho$ be a pure state which satisfies
$$
H(X_1) >0 \;\; \mbox{and}\;\;I(X_1:X_i)=0 \; \mbox{for}\;i \in \{2,3,4\}.
$$
Assuming w.l.o.g. that $H(X_2) \leq H(X_3) \leq H(X_4)$, we then have that 
\begin{align} \label{eq:app B special entropy less than others}
H(X_1) \leq H(X_i) \; \ \mbox{for} \ \; i \in \{2,3,4\}
\end{align}
as well as  
\begin{align*}
H(X_4)  \geq h \left( \frac{1}{8} \right) \quad \mbox{and}\quad 
H(X_3)  \geq \frac{1}{2} h \left( \frac{1}{8} \right)\,. 
\end{align*}

To establish these inequalities we use that any reduced state of a total pure state have equal entropy to its complement, sub-additivity, and the assumption $I(X_1 : X_4)=0$, together with the fact that $\rho$ is pure, imply that
\begin{align} \label{eq:app B ineq for prove X1 less X2}
H(X_1)+H(X_4) \leq H(X_2)+H(X_3)\,,
\end{align}
and hence $H(X_1) \leq H(X_2)$ which proves (\ref{eq:app B special entropy less than others}). The inequality (\ref{eq:app B ineq for prove X1 less X2}) also entails
\begin{align*}
H(X_3) \geq \frac{1}{2} H(X_4) \geq \frac{1}{2} h \left( \frac{1}{8} \right)
\end{align*}
by (\ref{eq:refinement of main result largest entropy}), thus proving the other two inequalities. 

By a similar argument we get for general $N \geq 4$ that, if the conditions of Theorem \ref{thm:most general constrained inequality} are satisfied and we assume $H(X_2)\leq\dots \leq H(X_N)$, then
$$H(X_{N-1}) \geq \frac{1}{N-2} H(X_N) \geq \frac{1}{N-2} h \left( \frac{1}{2N} \right)\,.$$

\section{Constructing states satisfying the conditions from the main theorem} \label{app:explicit expressions for states}
In this section, we provide ways of constructing families of states, whose entropy vectors satisfy the conditions of Theorem \ref{thm:most general constrained inequality}. 

\bigskip
We first note that analogues of Pippenger's state $\eta$ mentioned in the introduction and fulfilling \eqref{eq: line3} for $N=4$ can be constructed for arbitrary $N$. An example that satisfies \eqref{eq: line3} for all pairs $\{X_i,X_j\}, 1\leq i<j\leq N$, is  
obtained by choosing the individual state spaces $\mathcal H_{X_i}$ to be identical of dimension $N$ with an orthonormal basis $\ket{k},\, k=0,1,\dots, N-1$, and defining 
\begin{align} 
\ket{V_N} & = 
\quad \frac{1}{N} \sum_{k=0}^{N-1} \ket{k,...,k} \notag \\  \label{eq:explicit N party special state}
& + \frac{1}{N} \sqrt{\frac{2}{(N-2)!}} \sum_{\sigma \in A_N} \ket{\sigma(0),...,\sigma(N-1)}\,,
\end{align}
where $\sigma$ runs over the set  $A_N$ of even permutations of $\{0,1,...,N-1\}$. For the state $\mu = \ket{V_N} \bra{V_N}$ one finds that
\begin{equation}\label{eq: etamut}
\mu_{X_i} = \sum_{k=0}^{N-1} \frac{1}{N} \ket{k} \bra{k} \quad \mbox{and}\quad 
\mu_{X_iX_j} = \mu_{X_i} \otimes\mu_{X_j} \,.
\end{equation}
It follows that $H(X_i)=\log N$ for all $i$ and from additivity of entropy for product states that \eqref{eq: line3} is fulfilled for all pairs $\{X_i,X_j\}$.  

Another, less complex, construction applies to the case where \eqref{eq: line3} only is required to hold for $i=1$ (relevant for Theorem \ref{thm:most general constrained inequality}) and $j\neq 1$. In this case, we assume the spaces $\mathcal H_{X_i}$ 
have dimension $N-1$ with orthonormal basis $\ket{k}, k=0,1,\dots N-2$, and we set 
\begin{align}
&\ket{W_N} = 
\, \frac{1}{N-1} \sum_{k=0}^{N-2} \ket{0,k,...,k} 
 + \frac{1}{N-1} \sum_{k=1}^{N-2}\sum_{l=0}^{N-2}\notag\\ 
 & \ \ket{k,l,l+k,l+k+1,\dots,l+N-2, l+1,\dots,l+k-1} , \notag
\end{align}
where addition inside kets is mod $N-1$. By inspection one finds that \eqref{eq: etamut} likewise holds for $\omega=\ket{W_N} \bra{W_N}$ when $i=1$ (with $N$ replaced by $N-1$) and so the corresponding mutual informations vanish as claimed. Incidentally, for $N=4$ all six mutual informations vanish for this state and its entropy vector is hence proportional to $\vec H(\ket{V_4}\bra{V_4})$ as well as to $\vec H(\eta)$. For the reader's convenience, we record the  explicit expressions for $\ket{V_4}$ and $\ket{W_4}$:
 
\begin{align*}
\ket{V_4} & = \frac{1}{4} \Big(\ket{0000}+\ket{0123}+\ket{0231}+\ket{0312} +\ket{1111}\\ +&\ket{1032}+\ket{1320}+\ket{1203}+\ket{2222}+\ket{2301}+\ket{2013}\\ &+\ket{2130} +\ket{3333}+\ket{3210}+\ket{3102}+\ket{3021} \Big)\,, 
\end{align*}
\begin{align*}
\ket{W_4} & = \, \frac{1}{3} \Big(\ket{0000}+\ket{0111}+\ket{0222}+\ket{1012} \\ 
& +\ket{1120}+\ket{1201}+\ket{2021}+\ket{2102}+\ket{2210} \Big)\,.
\end{align*}

As noted at the end of Section \ref{sec:presenting entropy inequality}, the set of entropy vectors in $\Sigma_N^*$ satisfying the conditions of Corollary \ref{cor: constrained inequality N constarints} has dimension at most $2^N-N-1$, which equals $4$ for $N=3$. We now construct by simple means such a $4$-parameter family for arbitrary $N\geq 3$. We concentrate on the case $N=3$, since the extension to $N>3$ is straightforward.  

As a first step we generalize the state $\ket{V_4}$ given above to 
\begin{equation} \label{eq:explicit N party special state variable coeff}
\ket{\tilde V_{4}}  = 
 \frac{1}{2} \sum_{i=0}^{3} a_i \ket{iiii} + \frac{1}{2} \sum_{\sigma \in A_4} a_{\sigma(3)} \ket{\sigma(0)\sigma(1)\sigma(2)\sigma(3)}\;,
\end{equation}
where the numbers $\{a_i\}_{i=0}^{3}$ satisfy $\sum_{i=0}^{3} |a_i|^2=1$.
The pure state $\ket{\tilde V_{4}} \bra{\tilde V_{4}}$ then still fulfills $I(X_4:X_j)=0$ for $j=1,2,3$ while  
\begin{equation*}
 H(X_4)= \alpha \;\;\mbox{and\; $H(X_j)=2$ for $j=1,2,3$}\,,
\end{equation*}
where $\alpha =  - \sum_{i=0}^{3} |a_i|^2 \log (|a_i|^2)$ can assume any value in the interval $[0,2]$. For the corresponding $3$-partite state $\rho_{ABC}=Tr_D [\ket{\tilde V_{4}} \bra{\tilde V_{4}}$, where we use the notation $A,B,C,D$ instead of $X_1,X_2,X_3,X_4$ , we thus obtain 
the following $1$-parameter family of entropy vectors by using that any reduced state of a total pure state have equal entropy to its complement
\begin{align*}
\vec{H}_{\tilde\rho}=(2,2,2,2+\alpha,2+\alpha,2+\alpha,\alpha)\,.
\end{align*}

Next, let $B'$ and $C'$ be two new systems with state space $\mathcal H_{B'} = \mathcal H_{C'}$ having orthonormal basis states $\ket{l},\; l=1,\dots,M$, and let the pure state $\rho'_{B'C'}=\ket{V'}\bra{V'}$ on the composite system $B'C'$ be given by 
$$\ket{V'} =  \sum_k a'_l\ket{ll}\,,$$
where $\sum_l |a'_l|^2 = 1$. 
Then 
$$H_{\rho'_{B'C'}}(B')= H_{\rho'_{B'C'}}(C') = - \sum_l |a'_l|^2\log |a'_l|^2\,.$$

Considering the product state $\tau = \tilde\rho_{ABC}\otimes \rho'_{B'C'}$ as a $3$-partite state with constituent systems $A, BB', CC'$, it now follows from additivity of entropy that the entropy vector of this product state is 
$$\vec H_\tau = \vec H_{\tilde\rho} + (0,\beta,\beta, 0, \beta,\beta,0)\,,$$
which is a $2$-parameter family of states with $\beta := - \sum_l |a'_l|^2\log |a'_l|^2$ varying in the interval $[0,\log M]$ and fulfilling the conditions of Corollary \ref{cor: constrained inequality N constarints}. Repeating this last construction with new systems $A''C''$ and $A'''B'''$, we finally obtain a $4$-parameter family in $\Sigma_3^*$,
\begin{align*}\vec H(\alpha,&\beta,\gamma,\delta) = (2+\gamma+\delta, 2+\beta+\delta,2+\beta+\gamma,\\
& 2+\alpha+\gamma+\delta,2+\alpha+\beta+\delta,
2+\alpha+\beta+\gamma,\alpha)\,,
\end{align*}
with the claimed properties. For $N\geq 4$ the method can be further refined and obviously leads to a larger number of parameters than $4$, but we refrain from elaborating on this issue here.

\end{appendix}




\providecommand{\noopsort}[1]{}\providecommand{\singleletter}[1]{#1}%

\end{document}